\newcommand{\hide}[1]{}
\newcommand{\QED}{\hfill$\qed$}
\newtheorem{procedure}{Procedure}
\newtheorem{algorithm}{Algorithm}
\begin{document}

\title{Geometric $k$-Center Problems with
Centers Constrained to Two Lines}
\author{Binay Bhattacharya\inst{1}
\and Ante \'Custi\'c\inst{2}
\and Sandip Das\inst{3}
\and Yuya Higashikawa\inst{4} 
\and Tsunehiko Kameda\inst{1}
\and Naoki Katoh\inst{5}
}

\institute{School of Computing Science, Simon Fraser University, Burnaby, Canada 
\and Dept. of Mathematics, Simon Fraser University, Burnaby, Canada
\and Advanced Computing and Microelectronics Unit, Indian Stat. Inst., Kolkata, India 
\and Dept. of Information and System Engineering, Chuo University, Tokyo, Japan
\and School of Science and Technology, Kwansei University, Hyogo, Japan
}

\maketitle
\begin{abstract}
We consider the $k$-center problem,
in which the centers are constrained to lie on two axis-parallel lines.
Given a set of $n$ weighted points in the plane,
which are sorted according to their $x$-coordinates,
we show how to test in $O(n\log n)$ time if $k$ piercing points placed on
two lines, parallel to the $x$-axis,
can pierce all the disks of different radii centered at the $n$ given points.
This leads to an $O(n\log^2 n)$ time algorithm for the weighted $k$-center problem.
We then consider the unweighted case,
where the centers are constrained to be on two perpendicular lines.
Our algorithms run in $O(n\log^2 n)$ time in this case also.
\end{abstract}

\section{Introduction}\label{sec:intro}
The {\em $k$-center problem}, a.k.a., the {\em $p$-center problem},
is one of the most intensively studied problems in computational geometry.
The {\em geometric} $k$-center problem is defined as follows~\cite{preparata1990}.
Given are a set $P =\{p_1,p_2,\ldots, p_n\}$ of $n$ points in the plane,
where point $p_i$ ($i=1,2,\ldots, n$) has weight $w_i$,
and a positive integer $k$.
The objective is to find a set of $k$ centers such that the maximum
over all points in $P$ of the weighted distance from a point to its nearest center
is minimized. 
It was shown by Megiddo \cite{megiddo1984} that this problem is NP-hard in its general form.
To find more tractable cases solvable in polynomial time,
researchers have considered many variations of this problem in terms of the metric used,
the number of centers,
the weights (uniform vs. non-uniform), and constraints on the allowed positions
(discrete points, lines, polygon, etc.) of the centers~\cite{hurtado2000,megiddo1983b,megiddo1983d}.

When the centers are constrained to a single line,
Karmakar et al. showed that the {\em unweighted} $k$-center problem,
where the vertices have the same weight,
can be solved in $O(n\log n)$ time~\cite{karmakar2013},
and Wang and Zhang showed that the {\em weighted} $k$-center problem,
can also be solved in $O(n\log n)$ time~\cite{wang2014a}.
It was recently shown by Chen and Wang \cite{chen2015a} that the weighted version of this problem
has a lower bound of $\Omega(n\log n)$ on its time complexity.
The $O(n\log n)$ time algorithms mentioned above are based on Megiddo's parametric
search~\cite{megiddo1983a} with Cole's speed-up~\cite{cole1987}.
Since they use the AKS network~\cite{ajtai1983}, 
the coefficient hidden in the big-O notation is huge.
If a more practical sorting network, such as {\em bitonic sorting network}~\cite{knuth1997},
is used the time requirement increases to $O(n\log^2 n)$.
The 1-dimensional $k$-center problem is discussed by Bhattacharya et al.~\cite{bhattacharya2007}, Chen and Wang \cite{chen2015a}, and Fournier and Vigneron~\cite{fournier2013}.
If the $l_1$  or $l_{\infty}$ distance metric is used instead of $l_2$,
in other words, if the enclosing shape is an axis-parallel square,
then we can apply a method due to Frederickson and Johnson
\cite{frederickson1982,frederickson1983}
to solve this problem in $O(n)$ time~\cite{fournier2013},
provided the points are presorted.

In this paper we consider the $k$-center problem constrained to two parallel or perpendicular lines.
We first show that the weighted $k$-{\em piercing} problem on two parallel lines
can be solved in $O(n\log n)$ time.
This implies that $k$-center problem can be solved in $O(n\log^2 n)$ time,
using Megiddo's parametric search~\cite{megiddo1983a} with Cole's speed-up~\cite{cole1987}.
In the unweighted case and under the $l_1$  or $l_{\infty}$ distance metric,
it is known that this problem can be solved in $O(n\log n)$ time without using a huge
sorting network~\cite{bereg2015b}.

In the line-constrained problem,
it is a standard practice to use the intersection of an object $o(p_i)$ around each point
$p_i$ and a line.
This object $o(p_i)$ is a square under the $l_{\infty}$ distance metric and a disk under
the $l_2$ metric,
which is the case in our model. 
When disks are used instead of squares, 
some results from \cite{bereg2015b} carry over,
but a few complications arise.
First,
the end points of the intersection intervals do not have the same order as
 their $x$-coordinates.
Second,
if a disk intersects both of the two lines,
the intersection points do not have the same $x$-coordinate in general.
These complications make the problem more challenging.

The rest of the paper is organized as follows.
Sec.~\ref{sec:2Lines} discusses the case where the centers are constrained to two parallel lines,
and presents an $O(n\log^2 n)$ time algorithm.
In Sec.~\ref{sec:perpendicular} we discuss the unweighted case,
where the centers are constrained to two perpendicular lines.
We then show that the problem can be solved in $O(n\log^2 n)$ time.
Finally, Sec.~\ref{sec:conclusion} concludes the paper with a summary and open problems.

\section{Centers constrained to two parallel lines}\label{sec:2Lines}

\subsection{Preliminaries}\label{subsec:prelim}

Let $P=\{p_1, p_2,\ldots, p_n\}$ be a set of points in the plane,
where point $p_i$ has a positive weight $w_i$.
We denote the $x$-coordinate (resp. $y$-coordinate) of point $p_i$ by  $p_i.x$ (resp. $p_i.y$),
and assume that $p_i.x \leq p_{i+1}.x$ holds for $i=1,2,\ldots, n-1$,
i.e., the points are sorted according to their $x$-coordinates.
If the $x$-coordinates of two points are the same,
they can be ordered arbitrarily.
Let $d_i(r)$ denote the disk with radius $r/w_i$ centered at point $p_i$.
Let $L_1$ and $L_2$ be the two given axis-parallel horizontal lines ($L_1$ above $L_2$).
A problem instance is said to be {\em $(k,r)$-feasible} if $k$ centers can be placed
on the two lines in such a way that there is a center within distance $r/w_i$
from every point $p_i$.
For $i= 1, 2,\ldots, n$ let $J_1^i(r)= d_i(r)\cap L_1$ (resp. $J_2^i(r)= d_i(r)\cap L_2$).
We assume that at least one of $J_1^i(r)$ and $J_2^i(r)$ is non-empty,
since otherwise the problem instance is not $(k,r)$-feasible.
If $J_1^i(r)\not=\emptyset$ and $J_2^i(r)\not=\emptyset$, then
they are called {\em buddy intervals} or just {\em buddies}, and the corresponding point $p_i$ is called a \emph{buddy point}.
So, given a set $P$ of weighted points and a radius $r$, we are interested in finding a minimum cardinality set $S$ of points on $L_1$ and $L_2$ such that $S\cap J_1^i(r)\neq\emptyset$ or $S\cap J_2^i(r)\neq\emptyset$, for every $i=1,2,\ldots,n$. We call such $S$, a set of piercing points. From now on, up to Lemma~\ref{lem:feasibility1}, we assume that radius $r$ is fixed. 

Given $i$, $1\leq i\leq n$, we call a set $S_i$ of piercing points for the disks around points $P_i=\{p_1,p_2,\ldots,p_i\}$ a \emph{partial solution} on $P_i$. If $P_i=P$, a partial solution is a \emph{complete solution}.
Given a partial solution $S_i$ with $|S_i|=z$ piercing points, let $I_1$ be an interval on $L_1$ that represents the section along which the rightmost point of $S_i\cap L_1$ can be moved so that all the disks for the points in $P_i$ are still pierced. Analogously, let $I_2$ represents such an interval on $L_2$. Then the triple $c=(I_1,I_2;z)$ is said to be a \emph{configuration} for $P_i$, where the non-negative integer $z$ is called the {\em count} of the configuration~\cite{bereg2015b}. Furthermore, associated with configuration $c$ is a {\em piercing sequence} $\wp(c)$ of $\max\{z-2,0\}$ fixed piercing points
on $L_1$ and $L_2$, i.e., $S_i$ without rightmost points on $L_1$ and $L_2$. In other words, configuration $(I_1,I_2;z)$ represents a class of partial solutions that consist of $z$ piercing points that are identical,  except for the rightmost points on $L_1$ and $L_2$, which can be anywhere on $I_1$ and $I_2$, respectively. 

For the purpose of our algorithm, we would like to be able to dismiss partial solutions/configurations that cannot be the unique ones that lead to the complete solutions of minimum cardinality. To that end, we introduce the domination property. Given a partial solution $S$, a \emph{complete extension} of $S$ is a superset of $S$ which is a complete solution.
Let $c'=(I'_1,I'_2;z')$ and $c''=(I''_1,I''_2;z'')$ be two different configurations for  $P_i$. We say that $c'$ \emph{dominates} $c''$ if, regardless of what the remaining points in $P$ are,  there cannot exist a complete extension of some partial solution represented by $c''$ which has a strictly smaller cardinality than all the complete extensions of all partial solutions represented by $c'$.

\subsection{Algorithm}\label{sec:algorithm}

Now we are ready to present the algorithm. We scan the points in $P=\{p_1, p_2,\ldots,\linebreak p_n\}$ from left to right, such that in step $i$ we generate all non-dominated configurations for $P_i$. We maintain the set $F$ of such configurations, called the {\em frontier configurations}. After we scan all the $n$ points, we know that the given instance is $(k,r)$-feasible if and only if there is a configuration in $F$ with the count at most $k$.
Given an interval $I$ on either $L_1$ or $L_2$, let $l(I)$ denote the left endpoint and $e(I)$ the right endpoint of $I$.

\begin{algorithm}\label{alg:feasibility}
~
\begin{enumerate}
\item
Initialize $F=\big\{([-\ell,-\ell],[-\ell,-\ell]; 0)\big\}$, where $\ell$ is a very large number.
\item
For $i=1,2,\ldots, n$, execute Steps 3, 4 and 8.
\item
Set $F'=\emptyset$.
\item
For each configuration $c=(I_1,I_2; z) \in F$,
do Steps 5--7 that apply,
and put the generated configurations in $F'$.
\item
~$[$$J_1^i(r)=\emptyset \wedge J_2^i(r)=\emptyset$$]$
The problem instance is not $(k,r)$-feasible (no point can pierce $d_i(r)$). Stop
\item
~$[$$J_1^i(r)\not=\emptyset$$]$
If $I_1 \cap J_1^i(r) \not= \emptyset$ then
convert $c$ into $(I_1\cap J_1^i(r),I_2; z)$
else convert into $(J_1^i(r),I_2; z+1)$ and add $e(I_1)$ into $\wp(c)$.
\item
~$[$$J_2^i(r)\not=\emptyset$$]$
If $ I_2 \cap J_2^i(r)\not= \emptyset$ then
convert $c$ into $(I_1,I_2\cap J_2^i(r); z)$
else convert into $(I_1,J_2^i(r); z+1)$ and add $e(I_2)$ into $\wp(c)$.
\item
Remove the dominated configurations from $F'$, and replace $F$ by $F'$.
\item
The problem instance is $(k,r)$-feasible if and only if there is
a configuration in $F$ whose count is no more than $k$.
\QED
\end{enumerate}
\end{algorithm}

\begin{theorem}\label{thm:correctness}
Algorithm~\ref{alg:feasibility} is correct.
\end{theorem}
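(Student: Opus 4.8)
The plan is to prove correctness by establishing a loop invariant on the set $F$ of frontier configurations and then deriving the feasibility test from it. Concretely, I would prove by induction on $i$ that immediately after the $i$-th execution of Steps 3--8 the set $F$ is a \emph{dominating set} of configurations for $P_i$, meaning: (i) [soundness] every $c\in F$ is a valid configuration for $P_i$, i.e.\ each partial solution represented by $c$ pierces every disk $d_1(r),\ldots,d_i(r)$; and (ii) [completeness up to domination] for every valid configuration $c^\ast$ for $P_i$ there is a $c\in F$ with either $c=c^\ast$ or $c$ dominating $c^\ast$. The base case $i=0$ is the initialization: with no disks to pierce, the only partial solution is the empty set, and the degenerate configuration $([-\ell,-\ell],[-\ell,-\ell];0)$ encodes ``no point placed yet,'' the $-\ell$ placeholders recording that the notional rightmost points lie arbitrarily far to the left; this is trivially the unique non-dominated configuration.

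For the inductive step I would fix a configuration $c=(I_1,I_2;z)\in F$ valid for $P_{i-1}$ and analyse how Steps 5--7 extend it to cover $d_i(r)$. Step 5 handles the infeasible case. When only $J_1^i(r)\neq\emptyset$ (symmetrically $J_2^i(r)$), the disk must be pierced on $L_1$: either the current rightmost $L_1$-point can be positioned to pierce $d_i(r)$, which is possible exactly when $I_1\cap J_1^i(r)\neq\emptyset$ and yields $(I_1\cap J_1^i(r),I_2;z)$, or it cannot, in which case a new $L_1$-point in $J_1^i(r)$ is required, the old rightmost point is frozen at $e(I_1)$, and the count rises to $z+1$. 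For a buddy point both $J_1^i(r),J_2^i(r)\neq\emptyset$, so $d_i(r)$ may be covered on either line; here I would argue that $c$ spawns the two candidate configurations obtained by applying the $L_1$-rule and the $L_2$-rule separately, both placed in $F'$. I would then verify that each generated configuration is again valid for $P_i$ (soundness) and, conversely, that every way of extending a partial solution of $c$ so as to pierce $d_i(r)$ is represented by, or dominated by, one of these candidates (local completeness). Taking the union over all $c\in F$ and invoking the induction hypothesis gives completeness-up-to-domination for $P_i$; Step 8 then discards dominated members, which is safe directly from the definition of domination and hence preserves the invariant.

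The main obstacle is the completeness half of the inductive step, and within it the justification of the two greedy choices baked into Steps 6--7: (a)~reusing the current rightmost point whenever $I_\cdot\cap J_\cdot^i(r)\neq\emptyset$ rather than opening a new one, and (b)~when a new point is opened, freezing the previous rightmost point at the \emph{right} endpoint $e(I_\cdot)$. Choice (a) is domination-based---keeping the count unchanged can never be worse---but it must be checked that it never forecloses a strictly better future option. Choice (b) is where the complications flagged in the introduction bite: because the endpoints of $J_1^i(r)$ and $J_2^i(r)$ are \emph{not} ordered by the $x$-coordinates of the $p_i$, and because a buddy disk meets the two lines at points with different $x$-coordinates, the usual ``push the fixed point as far right as possible'' monotonicity argument is not immediate. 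The crux is therefore a lemma showing that, for the left-to-right scan, a completed rightmost point placed at $e(I_\cdot)$ weakly dominates the same point placed anywhere else in $I_\cdot$ with respect to all later disks $d_j(r)$, $j>i$; I expect to prove this by comparing, for an arbitrary later disk, the admissible positions it leaves on each line and showing the containment runs in the direction favouring the rightmost placement.

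Finally I would close with the wrap-up. After the $n$-th step the invariant says $F_n$ is a dominating set of complete configurations. Let $m^\ast$ be the minimum cardinality of a complete solution. Soundness gives a genuine complete solution of size $\mathrm{count}(c)$ for each $c\in F_n$, so $m^\ast\le\min_{c\in F_n}\mathrm{count}(c)$; completeness-up-to-domination applied to the configuration of an optimal solution, together with the definition of domination (noting that a complete configuration is its own minimal complete extension), gives $\min_{c\in F_n}\mathrm{count}(c)\le m^\ast$. Hence $\min_{c\in F_n}\mathrm{count}(c)=m^\ast$, and the test in Step 9---``some $c\in F$ has count at most $k$''---decides exactly whether $m^\ast\le k$, i.e.\ whether the instance is $(k,r)$-feasible, which is the claim.
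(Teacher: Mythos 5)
Your plan is essentially the paper's own argument: a left-to-right scan maintaining, by induction, a set of configurations that is sound and complete up to domination, with the greedy choices in Steps 6--7 and the pruning in Step 8 justified directly from the definition of domination. The one point your outline glosses over, and which is the centerpiece of the paper's sketch, is that the test $I_1\cap J_1^i(r)\neq\emptyset$ in Step 6 is \emph{exhaustive}: a frozen point of $\wp(c)$ can pierce $J_1^i(r)$ only if $l(J_1^i(r))\le l(I_1)$, and since $p_j.x\le p_i.x$ for $j<i$ this forces $e(J_1^i(r))\ge e(I_1)$ and hence $I_1\subseteq J_1^i(r)$ --- so the piercing sequence never supplies coverage that the movable rightmost point cannot, the count is never over-incremented, and the exact freezing position inside $I_1$ is immaterial. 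This is the same interval-nesting fact (for $j<i$, a smaller left endpoint implies a larger right endpoint) that underlies your proposed ``rightmost placement dominates'' lemma, so your route goes through, but you should state and use it in this stronger form rather than only to justify freezing at $e(I_1)$.
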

\begin{proof}[Sketch]
In Step 3 of the algorithm, for all $i$ we aim to calculate a set $F$ of partial solutions on points $P_i$, such that, no mater what the rest of $P$ is, there is one partial solution in $F$ that can be extended to a complete solution with the minimum cardinality. If that is the case, the algorithm is correct. 

Recall that we group partial solutions into configurations and treat them as a unit. 
One such configuration $c=(I_1, I_2; z)$ represents all partial solutions consisting of the points
in $\wp(c)$ and one piercing point each on $I_1$ and $I_2$.
Consider configuration $c=(I_1,I_2;z)$, and a new point $p_{i}$ with the corresponding interval $J_1^i(r)$. 
If $J_1^i(r)$ is already pierced by a point from $\wp(c)$, then from the definition of $\wp(c)$, we have $l(J_1^i(r))\leq l(I_1)$. In this case, from the fact that $p_{i}.x\geq p_j.x$ for $i> j$, we also have $r(J_1^i(r))\geq r(I_1)$. It thus follows that $I_1\subseteq J_1^i(r)$, which implies that $J_1^i(r)$ is pierced also by every point from $I_1$.
Hence we know whether $J_1^i(r)$ can be pierced by some partial solution corresponding to $c$ without observing $\wp(c)$. Therefore, we don't lose any crucial information by restricting to configurations, i.e., only the number of piercing points (count) and positions of the rightmost piercing points on $L_1$ and $L_2$ play a role in the quality of a partial solution with respect to the remaining points in $P$. 

In Step 6 of the algorithm, if $I_1 \cap J_1^i(r) \not= \emptyset$, only $(I_1\cap J_1^i(r),I_2; z)$, and not  $(J_1^i(r)\setminus I_1,I_2; z+1)$, is generated. That is because the latter is dominated by the former. When creating $(J_1^i(r),I_2; z+1)$, in addition we need to add an arbitrary point from $I_1$ to $\wp(c)$. Step 7 works analogously.

From the definition of domination, it follows that removing dominated configurations in Step 8 won't affect the correctness.
\QED
\end{proof}

By implementing Algorithm~\ref{alg:feasibility} directly, inefficiencies would be caused if
searching $F$ in Step~4 and removing dominated configurations in Step 8 are blindly executed. In the following subsection we describe how Algorithm~\ref{alg:feasibility} can be implemented efficiently.

\subsection{Implementation}\label{sec:implement}

Let us first discuss how to implement one round of Steps 4 and 8 of Algorithm~\ref{alg:feasibility} efficiently.
For this purpose we identify the intervals $I_1$, $I_2$ of configuration $(I_1,I_2;z)$ with its right endpoints $e(I_1)$, $e(I_2)$, respectively. We can do so since $I_1$ intersects $J_1^{i}(r)$ if and only if $l(J_1^{i}(r))\leq e(I_1)$, in which case $e(I_1 \cap J_1^{i}(r))=\min\{e(I_1),e(J_1^{i}(r))\}$. This follows from  $l(I_1)<e(J_1^{i}(r))$, which is a consequence of $p_{i}.x\geq p_j.x$ for $i> j$ and the fact that $l(I_1)$ is the left endpoint of some interval that has been processed so far. So in the rest of this section, a configuration will be represented by $(x_1,x_2;z)$, where $x_1$ and $x_2$ are the right endpoints of the respective intervals $I_1$ and $I_2$.

To represent configuration $c=(x_1, x_2; z)$ visually,
we draw a line segment between $x_1$ and $x_2$ and label it by its count $z$.
We say that configurations $(x'_1, x'_2; z')$ and $(x''_1, x''_2; z'')$
 {\em cross each other} if 
either $x'_1 < x''_1$ and $x'_2 > x''_2$ or $x'_1 > x''_1$ and $x'_2 < x''_2$ hold.
Configurations $(x'_1, x'_2; z')$ and $(x''_1, x''_2; z'')$ are said to be {\em disjoint}
if either $x'_1 < x''_1$ and $x'_2 < x''_2$ or $x'_1 > x''_1$ and $x'_2 > x''_2$ hold. See Fig.~\ref{fig:dominated2}.
\begin{figure}[h]
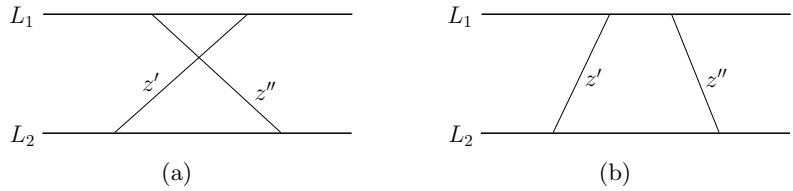

\vspace{-5pt}
\centering
\subfigure[]{\includegraphics[height=20mm]{figs/crossConfig.pdf}}
\hspace{1cm}
\subfigure[]{\includegraphics[height=20mm]{figs/disjointConfig.pdf}}
\vspace{-5pt}
\caption{Representations of configurations:
(a) configurations cross each other
(b) configurations are disjoint.}
\label{fig:dominated2}\vspace{-35pt}
\end{figure}
\begin{figure}[h]
\centering
\includegraphics[height=20mm]{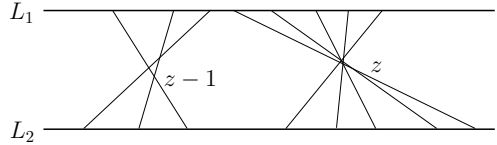}
\vspace{-7pt}
\caption{Set of configurations without any dominating configuration}
\label{fig:nondomF}
\vspace{-17pt}
\end{figure}

\begin{lemma}\label{lem:confSet}
Let $F$ be a set of configurations that represent partial solutions for point set $P_i$
 where no configuration in $F$ dominates another. 
Then
\begin{enumerate}
\item[(a)]
The counts of the configurations in $F$ differ by at most one.
\item[(b)]
Every pair of configurations in $F$ cross each other if they have the same count.
\item[(c)]
Every pair of configurations $(x'_1, x'_2; z),\ (x''_1, x''_2; z-1)\in F$  are disjoint and $x'_1>x''_1$, $x'_2>x''_2$. (See Fig.~\ref{fig:nondomF}.)
\end{enumerate}
\end{lemma}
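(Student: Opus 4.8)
The plan is to reduce all three combinatorial claims to a single quantitative comparison between two configurations, and then read (a)--(c) off from it. For a configuration $c=(x_1,x_2;z)$ on $P_i$ and any realization $R$ of the remaining points $P\setminus P_i$, let $\mu(c,R)$ denote the minimum cardinality of a complete extension taken over all partial solutions represented by $c$. Unwinding the definition of domination, $c'$ dominates $c''$ \emph{exactly} when $\mu(c',R)\le\mu(c'',R)$ for every $R$; hence ``$F$ contains no dominating pair'' means that for each ordered pair this inequality fails for some $R$. Writing $\mu(c,R)=z+\nu(x_1,x_2,R)$, the term $\nu$ counts the extra piercing points needed to stab the disks of $R$ given that two \emph{movable} points may still be slid to arbitrary positions $\le x_1$ on $L_1$ and $\le x_2$ on $L_2$. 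Since the disks of $R$ lie to the right, and since any disk of $R$ whose line-interval is stabbed by a point of $\wp(c)$ actually contains the movable interval $I_j$ (the containment $I_j\subseteq J^{\,m}_j(r)$ established in the proof of Theorem~\ref{thm:correctness}) and is therefore stabbed by the movable point as well, $\nu$ depends on $c$ only through $x_1,x_2$ and is non-increasing in each.

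The crux is the following \emph{transplant inequality}: for any two configurations $c=(x_1,x_2;z)$ and $\hat c=(\hat x_1,\hat x_2;\hat z)$ on $P_i$ and every $R$,
\[
\mu(\hat c,R)\;\le\;\mu(c,R)+(\hat z-z)+[\hat x_1<x_1]+[\hat x_2<x_2],
\]
where $[\,\cdot\,]$ equals $1$ if the enclosed inequality holds and $0$ otherwise. To prove it I would take an optimal extension of $c$ for $R$, whose movable points sit at $a_1\le x_1$ and $a_2\le x_2$ and which adds $\mu(c,R)-z$ fresh points, and transplant its covering of $R$ onto $\hat c$: keep $\hat c$'s own partial solution (which already pierces $P_i$), reuse the $\mu(c,R)-z$ fresh points, and for each line $L_j$ either slide $\hat c$'s own movable point into the role of $a_j$ when $\hat x_j\ge x_j$ (free of charge) or add $a_j$ as one extra fresh point when $\hat x_j<x_j$ (paying the indicator). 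The resulting set pierces $P_i\cup R$ and has the asserted size.

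With the transplant inequality in hand, (a)--(c) follow by choosing which configuration plays the role of $\hat c$ so that the indicators are absorbed by the count gap. For (a), if two counts differed by at least two I would play the smaller-count configuration as $\hat c$; then $\hat z-z\le-2$ outweighs the at most $+2$ from the indicators, giving $\mu(\hat c,R)\le\mu(c,R)$ for every $R$, so $\hat c$ dominates $c$, a contradiction. For (b) (equal counts), if the two were disjoint rather than crossing, play the one with both right endpoints larger as $\hat c$: both indicators vanish and $\hat z-z=0$, so it dominates --- impossible; a tie in a single coordinate is excluded identically, leaving crossing as the only option. For (c) (counts $z$ and $z-1$), if the count-$(z-1)$ configuration had both endpoints at least as large it would dominate (both indicators $0$, $\hat z-z=-1$); and if it were ahead on one coordinate and behind on the other, playing it as $\hat c$ pits one indicator $1$ against $\hat z-z=-1$, which cancel, so it would still dominate. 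Both outcomes are impossible, so the count-$z$ configuration must lie strictly to the upper right, i.e. $x_1'>x_1''$ and $x_2'>x_2''$, which is exactly the asserted disjointness.

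The main obstacle is the transplant inequality, and inside it the two-line coupling created by buddy disks. One must (i) justify, via the interval-endpoint monotonicity of Theorem~\ref{thm:correctness} and the tie-breaking convention for equal $x$-coordinates, that no fixed point of $\wp(c)$ stabs a disk of $R$ that the movable points miss, so that $c$'s covering of $R$ is genuinely reproduced by $a_1,a_2$ together with the reused fresh points; and (ii) verify that when $\hat x_j\ge x_j$ the movable point of $\hat c$, constrained to its own feasibility interval $\hat I_j$, can indeed take over the duty of $a_j$ without overshooting any disk of $R$, while the buddy (line) assignment of every disk of $R$ is preserved. Settling this bookkeeping is where the real work lies; once the transplant inequality is secured, (a)--(c) are immediate.
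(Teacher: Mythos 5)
Your proof is correct and follows essentially the same exchange argument as the paper's: for (a) and (c) the paper takes a minimum complete extension of one configuration and adjoins the right endpoint(s) $x''_1$ (and $x''_2$) of the other to convert it into an extension of that other configuration, which is precisely your transplant inequality with the indicator terms, and for (b) both arguments invoke the interval-monotonicity established in the proof of Theorem~\ref{thm:correctness}. Your packaging of all three cases into one quantitative inequality is a clean unification, and your explicit attention to whether $\hat c$'s movable point, confined to its own interval, can legally take over the role of $a_j$ is a bookkeeping point that the paper's proof glosses over.
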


\begin{proof}
(a) Let $c'=(x'_1, x'_2; z')$ and $c''=(x''_1, x''_2; z'')$ be two configurations for the points in $P_i$,
such that $z'\geq z''+2$. Let $S$ be a minimum cardinality set of piercing points that extends $\wp(c')$ to a complete solution. Then $S\cup\{x''_1,x''_2\}$ extends $\wp(c'')$ to a complete solution also. Since the cardinality of the latter complete solution doesn't exceed the cardinality of the former one, we have that $c''$ dominates $c'$.

(b) It suffices to prove that if $c'=(x'_1, x'_2; z)$ and $c''=(x''_1, x''_2; z)$ don't dominate each other, then they must cross each other. To prove the contrapositive, assume that they don't cross each other. Then without loss of generality we can assume that $x'_1\leq x''_1$ and $x'_2\leq x''_2$. 
But then from arguments in the proof of Theorem~\ref{thm:correctness},
it easily follows that $c'$ cannot lead to a complete solution of a smaller cardinality, i.e., $c''$ dominates $c'$.

(c) Assume to the contrary that there are two disjoint configurations $c'=(x'_1, x'_2; z)$ and $c''=(x''_1, x''_2; z-1)$ in $F$
such that, without loss of generality, $x'_2\leq x''_2$ holds.
Then again, if $S$ is a minimum cardinality extension of $\wp(c')$ to a complete solution, then $S\cup \{x''_1\}$
is an extension of $\wp(c'')$ to a complete solution, hence $c''$ dominates $c'$,
a contradiction to the fact that no configuration in $F$ dominates another.
\QED
\end{proof}

Lemma~\ref{lem:confSet} gives us a description of frontier configurations $F$. Now we describe how Steps 4 and 8 of Algorithm~\ref{alg:feasibility} can be implemented efficiently, i.e., how $F$ can be recomputed when a new point $p_{i}$ and corresponding intervals $J^i_1(r)$ and $J^i_2(r)$ are introduced.

Let the new point $p_i$ be a non-buddy point such that $[j^i_l,j^i_r]=J^i_1(r)\neq \emptyset$ and $J^i_2(r)=\emptyset$. Now we observe which configurations from $F$ will be preserved, which will be modified and which discarded, after $J^i_1(r)$ is introduced. We partition $F$ into three groups. Let $F_L=\{(x_1,x_2;z)\in F \mid x_1 <j^i_l\}$, $F_M=\{(x_1,x_2;z)\in F \mid j^i_l\leq x_1 \leq j^i_r\}$ and $F_R=\{(x_1,x_2;z)\in F \mid j^i_r<x_1 \}$. 
Step 6 of Algorithm~\ref{alg:feasibility} replaces each configuration $(x_1,x_2;z)$ from $F_L$ with $(j^i_r,x_2;z+1)$. Let $\bar{c}=(\bar{x}_1,\bar{x}_2,z)$ be an element of $F_L$ with the smallest first component. By a similar argument to that in the proof of Lemma~\ref{lem:confSet}, one can easily see that $(j^i_r,\bar{x}_2;z+1)$, created from $\bar{c}$, will dominate all other configurations created from $F_L$.
Analogously, configuration $(j^i_r,\hat{x}_2;z)$, created in Step~6 from a configuration $(\hat{x}_1,\hat{x}_2;z)$ from $F_R$ with the smallest first component, will dominate all other configurations created from the elements of $F_R$. Lastly, Step~6 leaves all elements of $F_M$ unchanged. To summarize, in the case of non-buddy point $p_i$ with $J^i_1(r)\neq \emptyset$, we can update frontier configurations in $F$ by scanning it from left and right with respect to the first component, and removing all components until we hit the interval $J^i_1(r)$. 
Furthermore, in place of the removed configurations, two new configurations are considered to join $F$.
This is illustrated in Fig.~\ref{fig:nonBuddyBuddy}(a), where the solid lines represent the frontier configurations
that are in $F$ before $p_i$ is processed. The configurations that survive are bolded. Two dashed lines are new configurations  considered, where in this case, the right one dominates, and hence is admitted to $F$.

\begin{figure}[h]
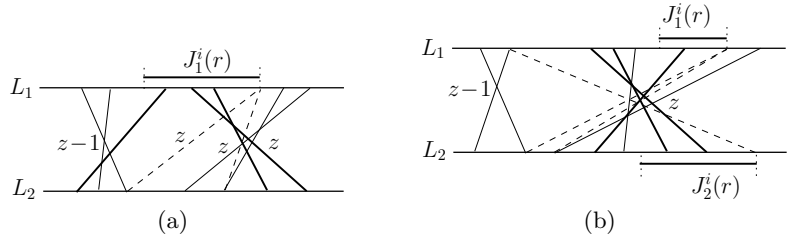

\vspace{-15pt}
\centering
\subfigure[]{\includegraphics[height=20mm]{figs/nonBuddyF3.pdf}}
\hspace{8mm}
\subfigure[]{\includegraphics[height=26mm]{figs/buddyF3.pdf}}
\vspace{-5pt}
\caption{Updating the frontier configurations: (a) for a non-buddy interval on $L_1$;
(b) for buddy intervals.
}
\label{fig:nonBuddyBuddy}
\vspace{-10pt}
\end{figure}
We present the above implementation more formally as a procedure.

\begin{procedure}\label{proc:nonBuddyL1} {\tt NonBuddy$(J_1^i(r))$}
\begin{enumerate}
\item
Scan the configurations in $F$ from left to right and right to left on $L_1$, until  $l(J_1^i(r))$ and $e(J_1^i(r))$ are reached, respectively. Delete all scanned configurations (except those with the first component $l(J_1^i(r))$ or $e(J_1^i(r))$).
\item
Let $(\bar{x}_1,\bar{x}_2;\bar{z})$ and $(\hat{x}_1,\hat{x}_2;\hat{z})$ be the deleted configurations with smallest first components in the left to right and right to left scanning process, respectively. (Note that they do not exist if $F_L=\emptyset$ and $F_R=\emptyset$.)
\item
If there is no configuration in $F$ with the first component $e(J_1^i(r))$, then insert in $F$ the  dominating configuration in the pair $\{ (e(J_1^i(r)),\bar{x}_2;\bar{z}+1),\linebreak (e(J_1^i(r)),\hat{x}_2;\hat{z}) \}$.
\QED
\end{enumerate}
\end{procedure}

The analogous case where the new point $p_i$ is a non-buddy point such that $J^i_1(r)= \emptyset$ and $J^i_2(r)\neq \emptyset$ can be implemented similarly. Hence we are left with the case where $p_i$ is a buddy point, i.e.,  $J^i_1(r)\neq \emptyset$ and $J^i_2(r)\neq \emptyset$. This case can be handled by applying the approach from Procedure {\tt NonBuddy} on both $L_1$ and $L_2$ simultaneously. 
In particular, we scan $F$ twice, 
from left to right and right to left on $L_1$ and then from left to right and right to left on $L_2$, 
until we reach $l(J_1^i(r))$, $e(J_1^i(r))$ and $l(J_2^i(r))$, $e(J_2^i(r))$, respectively. Then we delete all the configurations that were scanned twice. Furthermore, we consider two pairs of new configurations as in Step~3 of Procedure~\ref{proc:nonBuddyL1}, one for each line $L_1$ and $L_2$. Among these at most four configurations, we add to $F$ those that are not dominated.
See Fig.~\ref{fig:nonBuddyBuddy}(b),
where non-bold lines must be deleted, and dashed lines are considered to be added to $F$.
\hide{
\begin{figure}[h]
\vspace{-23pt}
\centering
\includegraphics[height=27mm]{figs/buddyF3.pdf}
\vspace{-8pt}
\caption{Updating the frontier configurations in the case of buddy intervals. Non-bold lines must be deleted, and dashed lines are considered to be added to $F$.}
\label{fig:Buddy}
\vspace{-10pt}
\end{figure}
}
\begin{lemma}\label{lem:feasibility1}
Algorithm~\ref{alg:feasibility} with the procedures defined above, can be implemented so that it can test $(k,r)$-feasibility in $O(n\log n)$ time.
\end{lemma}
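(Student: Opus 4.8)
The plan is to use the structural description of the frontier set $F$ furnished by Lemma~\ref{lem:confSet} to store $F$ in a balanced search structure on which each of the $n$ rounds of Steps~4 and~8 of Algorithm~\ref{alg:feasibility} runs in $O(\log n)$ amortized time, giving $O(n\log n)$ in total.

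First I would extract the consequences of Lemma~\ref{lem:confSet} that make such a representation possible. By part~(a), at any stage only two consecutive counts, say $z$ and $z+1$, occur in $F$. By part~(b), the configurations of a single count form a pairwise-crossing family, so listing them by increasing first component $x_1$ lists them by decreasing second component $x_2$. By part~(c), every count-$(z+1)$ configuration lies strictly to the upper right of every count-$z$ configuration, so the two count classes occupy disjoint $x_1$-ranges. I would therefore keep each count class in its own balanced binary search tree keyed by $x_1$, which---because within a class the $x_1$- and $x_2$-orders are reverses of each other---also supports searching by $x_2$; predecessor and successor queries and single insertions or deletions then cost $O(\log n)$ each. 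The intervals $J_1^i(r)$ and $J_2^i(r)$ are precomputed for all points in $O(n)$ time.

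Next I would analyze a single round. For a non-buddy point with $J_1^i(r)=[j^i_l,j^i_r]\neq\emptyset$, Procedure~\ref{proc:nonBuddyL1} locates $j^i_l$ and $j^i_r$ in the $x_1$-order by binary search, discards the two end blocks $F_L$ and $F_R$, leaves $F_M$ untouched, and inserts at most one new configuration (the dominating candidate). For a buddy point, the same locate-and-discard is carried out simultaneously by $x_1$ on $L_1$ and by $x_2$ on $L_2$; since within each count class increasing $x_1$ means decreasing $x_2$, the configurations that lie outside $J_1^i(r)$ in $x_1$ \emph{and} outside $J_2^i(r)$ in $x_2$ again form a constant number of contiguous end blocks, locatable by $O(1)$ binary searches, after which at most two new configurations survive the domination test and are inserted. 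The final $(k,r)$-feasibility test merely reads off the smallest count remaining in $F$.

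Finally I would close with an amortized count. Each round inserts at most two configurations, so over all $n$ points at most $2n+1$ configurations are ever created, and since $F$ begins with one configuration and a discarded configuration is never recreated, at most $2n+1$ configurations are ever deleted. As every insertion, deletion, and binary search costs $O(\log n)$, the boundary searches contribute $O(n\log n)$ and the deletions, being $O(n)$ in number, contribute $O(n\log n)$ as well, yielding the claimed bound. The step I expect to be the main obstacle is the buddy case: one must confirm that, under the combined $x_1$/$x_2$ structure of Lemma~\ref{lem:confSet}, the set to be deleted is indeed a constant number of contiguous blocks, and that keeping only the dominating new configurations (so that $O(1)$ survive per round) still preserves the invariant of Lemma~\ref{lem:confSet}; once these are established, the amortization gives $O(n\log n)$.
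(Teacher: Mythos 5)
Your proposal is correct and follows essentially the same route as the paper: balanced search structures over $F$ giving $O(\log n)$ search/insert/delete, $O(1)$ boundary searches per point to locate the (constantly many, by Lemma~\ref{lem:confSet}) contiguous blocks to delete, and an amortized charge of all deletions against the $O(n)$ configurations ever created. The only difference is organizational --- you keep one tree per count class and exploit the reversed $x_1$/$x_2$ order within a class, whereas the paper keeps two global lists sorted by first and second component --- and the buddy-case concern you flag is exactly the point the paper also addresses (the deleted configurations form several contiguous subsequences, possibly interior ones, but each is still found in $O(\log|F|)$ time and each deletion is paid for once).
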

\begin{proof}
Our procedures create at most two new configurations for each new point $p_i$ introduced, hence the total number of configurations created is $O(n)$.

We can keep two lists of pointers to configurations of $F$, one which is sorted with respect to the first component, and second one with respect to the second component. We implement these lists using a data structure that allows search, insert and delete in $O(\log n)$ time, for example using a 2-3 tree. Then it is easy to see that the scanning part of {\tt NonBuddy} procedures (Step~1) can be done in $O(m \log |F|)$ time, where $m$ is the number of removed (scanned) configurations. Steps 2 and 3 can be done $O(\log |F|)$ time.
Now let us consider the more complicated non-buddy case. 
As can be seen in Fig.~\ref{fig:nonBuddyBuddy}(b), there could be multiple subsequences of configurations that need to be removed, some of which are surrounded by configurations that need to be preserved. To reach areas that need to be removed we need $O(\log |F|)$ time, and to delete configurations we need $O(m \log |F|)$ time, where $m$ is the number of configurations that need to be removed. Hence, in both buddy and non-buddy cases, the processing of point $p_i$ can be done in $O(m_i \log n)$ time, where $m_i$ is the number of configurations deleted in the processing of $p_i$. Once deleted, a configuration will not be considered again, hence $\sum_{i=1}^n m_i=O(n)$. Therefore the complexity of Algorithm~\ref{alg:feasibility} is $O(n\log n)$.
\QED
\end{proof}

\noindent
As in~\cite{bereg2015b},
we can use Megiddo's method~\cite{megiddo1983a}
applied to the AKS sorting network~\cite{ajtai1983} with Cole's speed-up~\cite{cole1987}
 to prove
\begin{theorem}\label{thm:P2k}
When the centers are constrained to two parallel lines,
we can solve the $k$-center problem  in $O(n\log^2 n)$ time.
\QED
\end{theorem}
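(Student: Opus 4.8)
The plan is to phrase the $k$-center problem as the optimization problem of finding the smallest radius $r^*$ for which the instance is $(k,r^*)$-feasible, and to solve it by Megiddo's parametric search driven by the feasibility test of Lemma~\ref{lem:feasibility1}. First I would record the monotonicity of feasibility: enlarging the disks can only make piercing easier, so $(k,r)$-feasibility implies $(k,r')$-feasibility for every $r'\geq r$. Hence a threshold $r^*$ exists, and it is attained at a \emph{critical radius}, i.e., a value of $r$ at which the combinatorial behavior of Algorithm~\ref{alg:feasibility} changes. Every branch of that algorithm is decided by a comparison between endpoints of the intervals $J_1^i(r),J_2^i(r)$ or of the configuration endpoints $x_1,x_2$, each of the form $p_i.x\pm\sqrt{(r/w_i)^2-\delta_i^2}$, where $\delta_i$ is the signed vertical distance from $p_i$ to the relevant line. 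Equating two such expressions and clearing the radicals yields a bounded-degree polynomial in $r$, so each comparison flips sign at only $O(1)$ critical radii, each computable in $O(1)$ time.

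Next I would simulate the feasibility test \emph{generically} at the unknown $r^*$. Whenever the simulation must resolve a comparison, I compute its $O(1)$ critical radii and, by invoking the test of Lemma~\ref{lem:feasibility1} as an oracle at each candidate, determine on which side of the candidate $r^*$ lies; this resolves the comparison and shrinks a maintained interval $(\alpha,\beta)\ni r^*$. Resolving comparisons one at a time costs $O(n\log n)$ per comparison against the $\Theta(n\log n)$ comparisons of the test, which is far too slow. To accelerate, I replace the sequential simulation by a parallel one: in a parallel step with $P$ independent comparisons I form all $P$ critical radii, sort them, and binary-search for $r^*$ among them with $O(\log P)$ oracle calls.

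As in~\cite{bereg2015b}, the natural parallel primitive is sorting the interval endpoints, whose pairwise comparisons are exactly the $r$-dependent ones; this is precisely where the complications flagged in Section~\ref{sec:intro} bite, since for buddy points the two intersection points have different $x$-coordinates and the endpoint order genuinely depends on $r$. I would realize this sort by the AKS network~\cite{ajtai1983}, of depth $O(\log n)$ and $O(n\log n)$ comparators, and apply Cole's speed-up~\cite{cole1987}, which resolves comparators lazily and with weights across the network's levels so that the \emph{total} number of oracle invocations is $O(\log n)$ rather than $O(\log n)$ per level. The dominant cost is then these $O(\log n)$ calls to the $O(n\log n)$-time test, namely $O(\log n)\cdot O(n\log n)=O(n\log^2 n)$, which absorbs the $O(n\log n)$ work of generating critical radii along the network; this gives the claimed bound, and $r^*$ is the surviving endpoint of $(\alpha,\beta)$.

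The main obstacle will be organizing the parallel decision algorithm so that Cole's speed-up applies cleanly: I must express the $r$-dependent comparisons of the feasibility test as the comparators of a single sorting network on the interval endpoints, and verify that once that order is fixed the remaining bookkeeping of Algorithm~\ref{alg:feasibility} introduces no further $r$-dependent branching that would inflate the oracle-call count beyond $O(\log n)$. Establishing the $O(1)$ bound on critical radii per comparison, via the degree of the radical-elimination polynomial and constant-time root extraction, is routine but must be checked since it underlies the per-step cost.
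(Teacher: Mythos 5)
Your proposal is correct and follows the same route as the paper, which proves this theorem simply by citing Megiddo's parametric search applied to the AKS sorting network with Cole's speed-up, using the $O(n\log n)$ feasibility test of Lemma~\ref{lem:feasibility1} as the oracle. Your write-up actually supplies more of the missing detail (monotonicity of feasibility, $O(1)$ critical radii per comparison, and the need to realize the $r$-dependent comparisons as comparators of a single endpoint-sorting network) than the paper does.
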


\section{Centers placed on both axes}\label{sec:perpendicular}
In this section we discuss only the unweighted case.\footnote{If the lines are perpendicular
to each other, 
the weighted case is more difficult to solve than the unweighted case.
For example,
unlike in the unweighted case,
four centers may not be sufficient to cover all the points in the ``center square,''
if there is a heavy point near the origin (0,0).
}
We assume the line $L_1$ (resp. $L_2$) to be the $x$- (resp. $y$)-axis,
and draw four lines $y=\pm r$ and $x=\pm r$,
as shown in Fig.~\ref{fig:perpendicular}.
\begin{figure}[ht]
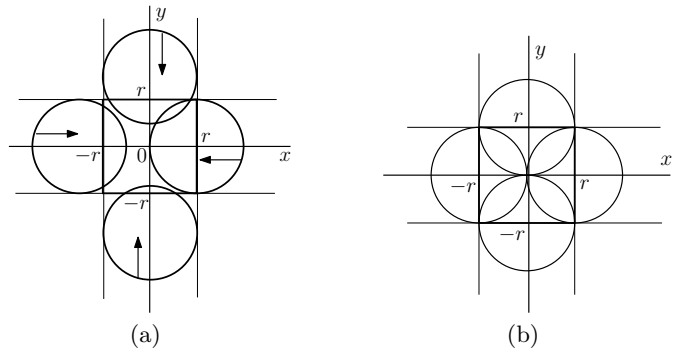

\vspace{-12pt}
\centering
\subfigure[]{\includegraphics[height=41mm]{figs/perpendicular1.pdf}}
\hspace{1cm}
\subfigure[]{\includegraphics[height=37mm]{figs/perpendicular2.pdf}}
\vspace{-7pt}
\caption{
(a) Cover the points outside the center square $S$;
(b) Four circles can cover all the points in the $2r\times 2r$ center square.}
\label{fig:perpendicular}
\vspace{-7pt}
\end{figure}
To be covered by circles of radius $r$ on the two axes,
the given points in $P=\{p_1, p_2,\ldots, p_n\}$ must be contained
in the horizontal and/or vertical bands of width $2r$ defined by those four lines.
We first sort them according to both the $x$- and $y$-coordinates.
Clearly,
any point $p_i$ with $p_i.x >r$ or $p_i.x <-r$ (resp. $p_i.y >r$ or $p_i.y <-r$)
must be covered by a circle centered on the $x$-axis (resp. $y$-axis).
Using a greedy method,
we can find the minimum number of enclosing circles from outside-in on the $x$-axis and $y$-axis.
See Fig.~\ref{fig:perpendicular}(a).
So, assume that we have covered all those points,
introducing fewer than $k$ circles.
Thus we need to consider only the remaining points that lie within the square $S$
surrounded by four lines $y=\pm r$ and $x=\pm r$.
Note that the circles used to cover points outside $S$ may also cover
some points in $S$,
 and they should be excluded.

It is clear that all the points in $S$ can be covered if we use four circles,
as shown in Fig.~\ref{fig:perpendicular}(b).
We thus want to test if they can be covered by one, two or three circles of radius $r$.
By a greedy method,
we can easily test if all the $n$ points in $S$ can be covered by one, two or three circles
on the same axis in $O(n)$ time.
Therefore, without loss of generality,
we assume that one center is placed on the $y$-axis, and the others (one or two)
are placed on the $x$-axis.
We do some preprocessing to reduce the time needed in each ``round.''

\subsection{Interval tree}
For a given radius $r$,
the $2n$ end points of the $n$ given intervals on the $x$-axis
have a certain order from left to right.
If that order is known,
we can, in $O(n)$ time, construct a balanced {\em interval tree} $T_r$ with $2n$ leaves,
each representing an end point.
We can imagine that $T_r$ is built on top of the $x$-axis,
starting from the left end point of the leftmost interval to the right end point of
the rightmost interval.
Fig.~\ref{fig:intervaltree} illustrates a simple example,
in which there are four intervals on the $x$ axis,
named $a,b,c$, and $d$,
reflecting four points in set $P$.
\begin{figure}[ht]
\vspace{-7pt}
\centering
\includegraphics[height=26mm]{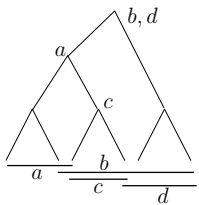}
\vspace{-7pt}
\caption{Interval tree $T_r$.}
\label{fig:intervaltree}
\vspace{-12pt}
\end{figure}

We store a set of 0 or more interval names at each node of $T_r$ as follows.
An interval is {\em active}, 
if it is not pierced by the center that we place on the $y$-axis.
Intervals may become active or inactive as the center on the $y$-axis is pushed down.
The set of points covered by the center undergoes changes $O(n)$ times.
For each interval $I$ that is initially active or becomes active,
we traverse the two upward path towards the root $\rho(T_r)$ of $T_r$ from its left (resp. right)
end point and paint the path blue (resp. red).
We store $I$ at the node where the blue and red paths meet.
Thus each interval appears at a unique node.
An edge may be painted blue, red, both, or not painted.
How many times an edge is painted by the same color is not important.

We now want to find the rightmost piercing point $q_l$ 
that pierces all the intervals whose left end points lie to the left of $q_l$.
It is easy to see that it can be found at the end (at a leaf node) of the leftmost red path from $\rho(T_r)$.
The procedure below shows how to update the coloring of $T_r$,
as intervals become active or inactive.

\begin{procedure} \label{proc:update}
Let $I$ be the interval that becomes activate/inactivate.
\begin{enumerate}
\item
~{\em [On becoming active]}
Paint the path from the left (resp. right) end point of $I$ to the root $\rho(T_r)$ blue (resp. red).
Let $v$ be the node where these two paths meet.
\item
~{\em [On becoming inactive]}
Remove the color due to the path from left (resp. right) end point of $I$ to the root $\rho_T$ until
the same-colored path due to some other interval is encountered.
\QED
\end{enumerate}
\end{procedure}

Let $q_l$ (resp. $q_r$) be the first piercing point that is required,
 starting from the left (resp. right) end.
 
\begin{lemma}\label{lem:feasibility2}
Suppose that colored paths are maintained as above.
\begin{enumerate}
\item
The piercing point $q_l$  is at the leaf reached by moving down from $\rho_T$ along the leftmost red path.
\item
The piercing point $q_r$  is at the leaf reached by moving down from $\rho_T$ along the rightmost blue path.\QED
\end{enumerate}
\end{lemma}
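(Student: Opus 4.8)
The plan is to translate the lemma into the classical greedy rule for stabbing a set of intervals with points, and then to read that rule directly off the red/blue coloring of $T_r$. First I would pin down $q_l$ and $q_r$ as concrete endpoints. By its defining property, $q_l$ is the rightmost point that still pierces every active interval $I$ with $l(I)\le q_l$. If $I^\ast$ is an active interval with the smallest right endpoint $e(I^\ast)$, then pushing $q_l$ beyond $e(I^\ast)$ fails to pierce $I^\ast$ (whose left endpoint is at most $e(I^\ast)$), whereas at $x=e(I^\ast)$ every active $I$ with $l(I)\le x$ satisfies $x=e(I^\ast)\le e(I)$ and is pierced. Hence $q_l=\min_I e(I)$ over active $I$, i.e. $q_l$ sits at the leftmost right-endpoint leaf; symmetrically $q_r=\max_I l(I)$ sits at the rightmost left-endpoint leaf.

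Next I would characterize which leaves carry incident colored edges. A red edge is painted only along the path from a right-endpoint leaf up to $\rho_T$ (Procedure~\ref{proc:update}), and such a path touches no leaf other than its own right endpoint; therefore the red edges incident to leaves are exactly those at right-endpoint leaves of active intervals, and dually for blue and left endpoints. Because every such path contains $\rho_T$, the red edges form a connected subtree $R$ rooted at $\rho_T$, every edge of which lies on some root-to-right-endpoint path. I would also note that the double painting of the segment above each meeting node $v$ creates no new red leaf edges, so the characterization is robust to the overlaps and to the color-removal rule in Procedure~\ref{proc:update}.

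The crux is the descent step: I would show that repeatedly taking the leftmost red edge out of each node, starting at $\rho_T$, lands exactly at the leftmost red leaf $\ell^\ast$, which by the first paragraph equals $q_l$. Let $\pi$ be the $\rho_T$-to-$\ell^\ast$ path, all of whose edges are red. Arguing by induction down $\pi$, suppose at a node $u\in\pi$ the descent picked a red child $c'$ strictly to the left of the child $c$ of $u$ lying on $\pi$. Then $(u,c')\in R$ lies on some root-to-right-endpoint path, so the subtree under $c'$ contains a red leaf $\ell'$; since all leaves under $c'$ precede those under $c$ in left-to-right order and $\ell^\ast$ lies under $c$, we get $\ell'$ strictly left of $\ell^\ast$, contradicting minimality. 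Thus $c$ is the leftmost red child at $u$, the descent follows $\pi$, and it terminates at $\ell^\ast=q_l$. Claim (2) follows by the mirror-image argument with blue edges and rightmost descent, giving the rightmost left-endpoint leaf $=q_r$.

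I expect this last step to be the main obstacle: one must rule out that the greedy leftmost-red descent strays into a subtree whose red leaves all lie to the right of $\ell^\ast$, which is precisely the monotonicity (leaves of a left child precede leaves of a right child) that I exploit above, together with the fact that every red edge descends to some red leaf. A secondary point to check carefully is that Procedure~\ref{proc:update}, which paints to the root and removes a color only until another interval's same-colored path is met, maintains the invariant that the red (resp. blue) leaf edges are in bijection with the currently active right (resp. left) endpoints; once that invariant is verified the two claims are immediate.
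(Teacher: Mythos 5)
Your proof is correct, and it supplies exactly the argument the paper omits: the lemma is stated with no proof beyond the remark that it is ``easy to see.'' Your three steps --- identifying $q_l$ with the minimum right endpoint of an active interval via the standard greedy piercing rule, observing that red leaf edges are in bijection with active right endpoints under Procedure~2, and the induction showing the leftmost-red descent cannot stray from the path to the leftmost red leaf --- are the intended justification, correctly carried out.
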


Now that we have found $q_l$ and $q_r$,
we want to know if they pierce all the active intervals.
Clearly, it is the case if $q_l$ does not lie to the left of $q_r$.
In this case two centers are enough to cover all the points in $S$,
which are not covered by the center on the $y$-axis.
So assume that $q_l$ lies to the left of $q_r$.
In this case we want to know if there is an interval that starts and ends between $q_l$ and $q_r$.

\subsection{Shortest-interval tree}
To find if there is such an interval,
 we construct, in $O(n)$ time, another search tree, similar to $T_r$,
called the \em shortest-interval tree and is denoted by ${\cal T}_r$.
Let $\rho({\cal T}_r)$ be the root of ${\cal T}_r$.
Let $J^i(r)=[f_i(r),g_i(r)]$ be an interval on the $x$-axis,
which is the intersection of disk $d_i(r)$ and the $x$-axis.
We interpret $g_i(r)$ as the \em value of $f_i(r)$.
We first sort the elements in $\{f_i(r)\mid -r< f_i(r) <r\}$ and arrange them from left to right.
They comprise the leaves of ${\cal T}_r$,
which is a heap  based on the values of the leaves,
so that for any node $u$ of ${\cal T}_r$, the minimum value among the leaves of subtree ${\cal T}_r(u)$
is associated with $u$.
Given any two leaves, $f_i(r)$ and $f_j(r)$,
we precompute the heap ${\cal T}_r$ and use it to find if there is any interval $J^h(r)$ contained
in the interval $(f_i(r),f_j(r))$ on the $x$-axis in $O(\log n)$ time.

As the center placed on the $y$-axis is pushed down,
the set of active elements in $S$ changes.
Let us define a large value $M$ by
\[
M=\max\{f_i(r)\mid -r< f_i(r) <r\} +1.
\]
When an interval $J^i(r)$ becomes inactive,
we change its value to $M$, and update the values associated with the nodes on
path $\pi(f_i(r),\rho({\cal T}_r))$, which takes $O(\log n)$ time,
using standard heap operations.
When $J^i(r)$ becomes active again,
we restore its value to $g_i(r)$ and update the values along $\pi(f_i(r),\rho({\cal T}_r))$.
Note that an interval changes its active/inactive status at most twice.
\begin{lemma}\label{lem:centerSquare}
For a given $r$,
we can test $(k,r)$-feasibility for the points in the $2r\times 2r$ square $S$ centered at
the origin in $O(n\log n)$ time for $k\leq 4$.
\QED
\end{lemma}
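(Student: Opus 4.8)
The plan is to organize the preceding development into a sweep over the position of the single $y$-axis center. First I would dispose of the easy cases: the instance is always $(4,r)$-feasible by Fig.~\ref{fig:perpendicular}(b), and whether the points of $S$ can be covered by one, two, or three circles placed on a \emph{single} axis is decided greedily in $O(n)$ time. By the symmetry of the two axes it then remains to decide, for $k\in\{2,3\}$, whether there is a placement consisting of one center on the $y$-axis together with one (for $k=2$) or two (for $k=3$) centers on the $x$-axis; running the symmetric variant with the roles of the axes exchanged handles the remaining $2$-to-$1$ split.

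Fix a height $t$ for the $y$-axis center. The points of $S$ it covers are exactly those whose $y$-axis intersection interval contains $t$; the remaining points are \emph{active}, and each active point must be pierced on the $x$-axis through its interval $J^i(r)=[f_i(r),g_i(r)]$. Using $T_r$ and Lemma~\ref{lem:feasibility2} I would compute, in $O(\log n)$ time, the greedy piercing points $q_l$ and $q_r$ of the active set. The two key observations are: (i) a single $x$-center suffices iff the active intervals share a common point, which by the extremal characterization of $q_l$ (leftmost right endpoint) and $q_r$ (rightmost left endpoint) is equivalent to $q_l\ge q_r$; and (ii) when $q_l<q_r$, the pair $q_l,q_r$ pierces every active interval except those lying strictly inside $(q_l,q_r)$, so two $x$-centers suffice iff no active interval is contained in $(q_l,q_r)$ --- a query answered by the shortest-interval tree ${\cal T}_r$ in $O(\log n)$ time. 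Thus for the fixed $t$ the minimum number of $x$-centers is determined in $O(\log n)$ time.

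To range over all relevant heights I would sweep the $y$-center downward. The active set changes only when $t$ crosses an endpoint of some $y$-axis interval, so there are only $O(n)$ distinct active sets, and each interval toggles its status at most twice. After sorting the $2n$ interval endpoints on each axis and building $T_r$ and ${\cal T}_r$ (together costing $O(n\log n)$), the sweep processes $O(n)$ events; at each event one interval is activated or deactivated, updating both trees in $O(\log n)$ time, after which the $O(\log n)$ feasibility test of the previous paragraph is applied. Declaring the instance $(2,r)$- (resp. $(3,r)$-) feasible as soon as some position needs at most one (resp. at most two) $x$-centers, and combining this with the single-axis greedy tests and the trivial $(4,r)$-case, yields the verdict for every $k\le 4$.

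The step I expect to be the main obstacle is justifying that sweeping over only the $O(n)$ combinatorial event positions of the $y$-center loses nothing: one must argue that the active set, and hence the number of $x$-centers required, is constant between consecutive events, so that an optimal placement is always attained at (or can be slid to) an event position. Granting this, the total cost is the $O(n\log n)$ preprocessing plus $O(n)$ events each costing $O(\log n)$, which is $O(n\log n)$, as claimed.
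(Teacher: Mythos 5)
Your proposal is correct and follows essentially the same route the paper takes: sweep the $y$-axis center over its $O(n)$ combinatorial event positions, maintain the active intervals in $T_r$ to extract $q_l$ and $q_r$ in $O(\log n)$ time, and query ${\cal T}_r$ for an interval contained in $(q_l,q_r)$, giving $O(n)$ events at $O(\log n)$ each after $O(n\log n)$ preprocessing. The "main obstacle" you flag is handled implicitly in the paper by the observation that the active set (and hence the answer) is constant between consecutive endpoint events, so nothing is lost.
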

\begin{theorem}\label{thm:piercing}
When the centers are constrained to be on two perpendicular lines,
we can test $(k,r)$-feasibility in $O(n\log n)$ time.
\QED
\end{theorem}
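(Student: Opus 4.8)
The plan is to reduce the feasibility test to two essentially independent subproblems—covering the points that lie outside the center square $S$, and covering those inside $S$—and to solve each within the stated budget. First I would sort $P$ once by $x$- and once by $y$-coordinate in $O(n\log n)$ time; everything afterwards should cost $O(n)$ plus a single call to Lemma~\ref{lem:centerSquare}. Before anything else I would scan for a point with $|p_i.x|>r$ and $|p_i.y|>r$ simultaneously: such a point is at distance more than $r$ from every point of both axes, so its presence makes the instance infeasible and we may stop. Hence every surviving point lies in the horizontal band, in the vertical band, or in $S$, and the points outside $S$ are forced onto a particular axis.

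Next I would dispose of these forced outside points. A point with $|p_i.x|>r$ can be pierced only by a center on the $x$-axis, and symmetrically for the $y$-axis. Scanning each of the four half-bands from its extreme end inward, a greedy sweep places, for the current outermost uncovered point, a circle whose center is pushed as far toward $S$ as possible while still covering that point. Because $S$ lies on the same side as the remaining (more interior) outside points, this single ``push inward'' rule simultaneously attains the minimum number of axis-circles for the band \emph{and} maximizes how far each circle reaches into $S$. This costs $O(n)$ after sorting and yields a count $k_{\mathrm{out}}$; if $k_{\mathrm{out}}>k$ we report infeasibility, and otherwise we retain a budget of $k_{\mathrm{in}}=k-k_{\mathrm{out}}$ circles for $S$.

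Then I would turn to the square $S$. The outside circles may already pierce some intervals inside $S$, so I would first mark those as covered and discard them. If $k_{\mathrm{in}}\ge 4$ the instance is feasible, since four circles always cover $S$ (Fig.~\ref{fig:perpendicular}(b)); otherwise $k_{\mathrm{in}}\le 3$ and I would invoke Lemma~\ref{lem:centerSquare} to test in $O(n\log n)$ time whether the remaining inside points can be pierced by $k_{\mathrm{in}}$ circles. Combining the initial sort, the linear outside sweep, the linear marking step, and the single $O(n\log n)$ call gives the claimed $O(n\log n)$ bound.

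The step I expect to be the main obstacle is justifying that this clean decomposition loses no optimality—that handling the outside points and the square separately, with the inside intervals covered by outside circles simply removed, still tests the true $(k,r)$-feasibility. This calls for an exchange argument: given any feasible placement, one should be able to normalize the circles piercing outside points so that they are exactly those produced by the outside-in greedy sweep, without increasing the total count and without uncovering any point. The delicate part is that a single such circle may pierce an outside point together with several inside intervals, so sliding it toward $S$ to normalize the outside cover must be shown not to forfeit inside coverage that the solution relied on. The maximal-inward-reach rule in the greedy step is precisely what makes this exchange go through, and it is also what makes the prior marking of already-covered inside intervals safe before the call to Lemma~\ref{lem:centerSquare}.
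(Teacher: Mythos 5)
Your overall route is the same as the paper's: sort once, handle the points forced outside the center square $S$ by an outside-in greedy on each axis, discard the points of $S$ already pierced by those circles, and finish with Lemma~\ref{lem:centerSquare} for the at most four circles needed inside $S$; the $O(n\log n)$ accounting also matches. The extra touches (the early rejection of a point with $|p_i.x|>r$ and $|p_i.y|>r$, the explicit budget $k_{\mathrm{in}}=k-k_{\mathrm{out}}$) are consistent with what the paper does implicitly.

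The place where your write-up goes beyond the paper is the exchange argument, and there the specific claim you lean on does not hold. Pushing a circle's center as far toward $S$ as possible maximizes how far it reaches \emph{along the axis}, but coverage of a two-dimensional point $p\in S$ by a center $c$ on the $x$-axis requires $c\in\bigl[\,p.x-\sqrt{r^2-p.y^2},\;p.x+\sqrt{r^2-p.y^2}\,\bigr]$, and this interval need not contain the maximally pushed-in position. Take $q=(1.1r,0)$, whose admissible centers form $[0.1r,\,2.1r]$ on the $x$-axis, and $p=(r-\epsilon,\,r-\epsilon)$ for small $\epsilon>0$, whose admissible centers form a tiny interval around $x=r-\epsilon$ (and, symmetrically, a tiny interval on the $y$-axis). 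A single center at $(r-\epsilon,0)$ pierces both, so the instance is $(1,r)$-feasible, yet your greedy places the center at $(0.1r,0)$, leaves $p$ uncovered, and charges a second circle. So sliding circles inward \emph{can} forfeit inside coverage that an optimal solution relied on; the normalization is not cost-free, and ``minimum count for the band plus maximal inward reach'' does not imply a superset of covered points of $S$. To be fair, the paper asserts the same decomposition without justifying it, so you have reproduced its argument, gap included; but since you explicitly claim the exchange goes through, you would need either a repair (for instance, handling together with the forced points those points of $S$ that are only coverable near the band boundary, or showing the extra circle can always be absorbed into the budget for $S$) or a genuinely different argument for why the outside and inside subproblems decouple.
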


\subsection{Optimization}
We want to look for the smallest $r$, named $r^*$,
such that the given points are $(k,r)$-feasible.
Clearly,
if we sort all the end points (the leaves of $T_r$) then the left endpoints
(the leaves of ${\cal T}_r$) are automatically sorted.
We adopt Megiddo's approach \cite{megiddo1983a} sort all the end points in $O(\log^2 n)$ time, 
using the AKS sorting network~\cite{ajtai1983},
which can be reduced to $O(\log n)$ time with Cole's improvement \cite{cole1987}.
Thus Theorem~\ref{thm:piercing} implies
\begin{theorem}\label{thm:perpendicular}
When the centers are constrained to be on two perpendicular lines,
we can solve the $k$-center problem in $O(n\log^2 n)$ time.
\QED
\end{theorem}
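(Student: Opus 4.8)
Theorem~\ref{thm:perpendicular} asserts that the $k$-center problem with centers on two perpendicular lines can be solved in $O(n\log^2 n)$ time.

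The plan is to find the optimal radius $r^*$ by Megiddo's parametric search, using the $O(n\log n)$ feasibility test of Theorem~\ref{thm:piercing} as the decision oracle. First I would note that $(k,r)$-feasibility is monotone in $r$: enlarging every disk can only make a covering easier, so there is a well-defined threshold $r^*$, equal to the optimal $k$-center radius, and it suffices to locate it.

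Next I would isolate exactly what in the decision procedure depends on $r$. The construction of the interval tree $T_r$ and the shortest-interval tree ${\cal T}_r$, together with the greedy outside-in covering, is governed entirely by the left-to-right order of the $2n$ interval endpoints on the $x$-axis (and, symmetrically, on the $y$-axis). Each endpoint is an algebraic function of $r$; for a point inside the band one has $f_i(r)=p_i.x-\sqrt{r^2-(p_i.y)^2}$ and $g_i(r)=p_i.x+\sqrt{r^2-(p_i.y)^2}$, so any pairwise equality such as $f_i(r)=f_j(r)$ reduces, after isolating the radicals and squaring, to a polynomial equation in $r$ of bounded degree with only $O(1)$ roots. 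Hence the sorted order of the endpoints changes only at finitely many critical values of $r$, and once this order at $r^*$ is known the combinatorial behavior of the test --- and therefore the optimum --- is determined.

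The core of the argument is the parametric search itself. I would simulate a parallel sorting of these endpoints using the AKS network, whose depth is $O(\log n)$. Running the network generically with $r$ as an unknown, each comparator asks whether $a(r)<b(r)$; I resolve it by computing the $O(1)$ critical values where $a(r)=b(r)$ and calling the decision oracle to decide on which side of each value $r^*$ lies. A naive level-by-level resolution would cost $O(\log n)$ levels times $O(n\log n)$ oracle calls each, which is too slow, so the essential step is to invoke Cole's speed-up. Cole's technique batches the unresolved comparisons across all levels of the network by a weighted-median scheme, collapsing the depth factor and yielding a total running time of $O(n\log n + T_s\log n)$, where $T_s$ is the cost of one oracle call.

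Substituting $T_s=O(n\log n)$ from Theorem~\ref{thm:piercing} gives $O(n\log^2 n)$, as claimed. I expect the main obstacle to be the clean application of Cole's framework rather than the feasibility test: one must check that every $r$-dependent comparison in the network is between bounded-degree algebraic functions, so that root-finding costs $O(1)$ per comparator, and that the AKS network satisfies the structural requirements under which Cole's scheme removes the extra logarithmic factor. The explicit endpoint formulas above make the degree bound routine, leaving only the careful interleaving of oracle calls with Cole's scheduling.
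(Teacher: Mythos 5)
Your proposal is correct and follows essentially the same route as the paper: Megiddo's parametric search over the sorted order of the interval endpoints, simulated on the AKS network with Cole's speed-up, using the $O(n\log n)$ feasibility test of Theorem~\ref{thm:piercing} as the decision oracle to obtain $O(n\log^2 n)$ overall. Your write-up simply supplies details (monotonicity in $r$, bounded-degree algebraic comparisons) that the paper leaves implicit.
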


\subsection{Corner and T-junction}\label{sec:corner}
Consider the problem in which the two constraining lines form a $90^{\circ}$ corner,
as shown in Fig.~\ref{fig:corner}(a).
Note that the points in area $A$ can be covered only by 
circles centered on the $y$-axis.
\begin{figure}[ht]
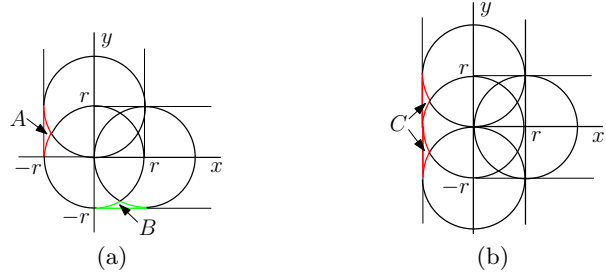

\vspace{-15pt}
\centering
\subfigure[]{\includegraphics[height=27mm]{figs/corner.pdf}}
\hspace{2cm}
\subfigure[]{\includegraphics[height=31mm]{figs/Tjunction.pdf}}
\vspace{-7pt}
\caption{(a) Corner; (b) T-junction.}
\label{fig:corner}
\vspace{-13pt}
\end{figure}
Therefore, in finding the minimum number of enclosing circles centered on the $y$-axis,
we must include them.
Similarly, the points in area $B$ can be covered only by 
circles centered on the $x$-axis,
thus in finding the minimum number of enclosing circles centered on the $x$-axis,
we must include them.
It is clear that the remaining points can be covered by at most three circles shown in
Fig.~\ref{fig:corner}(a).
Therefore, we can use the algorithm of Sec.~\ref{sec:perpendicular}.

If the two constraining lines form a T-junction,
as shown in Fig.~\ref{fig:corner}(b),
then the points in area $C$ can be covered only by 
circles centered on the $y$-axis.
Again, we can use the algorithm of Sec.~\ref{sec:perpendicular} in this case.
\begin{theorem}\label{thm:corner}
When the constraining lines form a $90^{\circ}$ corner or a T-junction,
we can solve the $k$-center problem in $O(n\log^2 n)$ time. 
\QED
\end{theorem}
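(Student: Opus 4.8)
The plan is to reduce both the corner and the T-junction cases to the two-perpendicular-lines setting of Section~\ref{sec:perpendicular}, so that the feasibility test of Theorem~\ref{thm:piercing} and the optimization of Theorem~\ref{thm:perpendicular} apply essentially verbatim. Throughout I fix a candidate radius $r$ and argue that $(k,r)$-feasibility can be tested in $O(n\log n)$ time; the $O(n\log^2 n)$ bound then follows by wrapping this test in Megiddo's parametric search with Cole's speed-up, exactly as in the proof of Theorem~\ref{thm:perpendicular}.

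First I would dispose of the forced points. In the corner case (Fig.~\ref{fig:corner}(a)), the absence of the two ``missing'' rays means that a point in region $A$ cannot lie within distance $r$ of any admissible center on the horizontal ray, so it must be pierced by a center on the vertical ray; symmetrically, every point of region $B$ must be pierced by a center on the horizontal ray. These constraints are forced and independent of the rest of the solution, so I would charge the region-$A$ points to the vertical axis and the region-$B$ points to the horizontal axis, and cover each group with the minimum number of circles using the greedy outside-in sweep already described in Section~\ref{sec:perpendicular}, which runs in $O(n)$ time once the endpoints are sorted.

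Next I would bound the residual work. After the forced points are covered, the only points left lie in the central region delimited by $x=\pm r$ and $y=\pm r$ intersected with the quadrant structure of the corner, and as the figure indicates this residual region is always coverable by at most three circles placed on the two rays. I therefore invoke the feasibility subroutine of Lemma~\ref{lem:centerSquare} and Theorem~\ref{thm:piercing} with the at most three remaining centers, excluding the points already pierced by the forced circles, in $O(n\log n)$ time. The T-junction case (Fig.~\ref{fig:corner}(b)) is identical in spirit: only region $C$ is forced, this time to the vertical axis, after which the same central-region test applies.

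The main obstacle I anticipate is geometric bookkeeping rather than any new algorithmic idea: I must verify carefully that regions $A$, $B$ (and $C$ in the T-junction) are characterized as \emph{exactly} those points having no admissible center on the other axis, and that the residual central region genuinely admits a three-circle cover respecting the corner or T-junction constraints, so that the at-most-four-center machinery of Section~\ref{sec:perpendicular} can be invoked as a black box. Once this reduction is justified, both correctness and the $O(n\log^2 n)$ running time follow directly from Theorems~\ref{thm:piercing} and~\ref{thm:perpendicular}.
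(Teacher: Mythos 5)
Your proposal matches the paper's own argument: identify the regions ($A$, $B$ for the corner, $C$ for the T-junction) whose points are forced onto one axis, cover them greedily, observe that the residual central region needs at most three circles, and then invoke the machinery of Section~\ref{sec:perpendicular} together with parametric search for the $O(n\log^2 n)$ bound. This is essentially the same reduction the paper gives, so no further comparison is needed.
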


\section{Conclusion}\label{sec:conclusion}
We have considered the constrained $k$-center problem,
where the centers must lie on two lines,
which are either parallel or perpendicular.
When the points are weighted and the two lines are parallel to the $x$-axis,
we have presented a $O(n\log n)$ time piercing algorithm.
If the two lines are perpendicular,
even if they are unweighted,
our piercing test method also takes $O(n\log n)$ time. 
We feel that this could be carried out in $O(n)$ time,
but this remains a conjecture at this time.
Another open problem is the weighted case for two perpendicular lines. 
The case where the centers are constrained to more than two lines is also open.
It is known that AKS sorting network is too huge to be practical,
but the recent result by Goodrich~\cite{goodrich2014} gives us hope
that it may become practical in the not-too-distant future.
\section*{Acknowledgement}\label{sec:ack}
We would like to thank Hirotaka Ono of Kyushu University and Yota Otachi of JAIST
for stimulating discussions on the topic of Section~\ref{sec:perpendicular}.
This work was supported in part by Discovery Grant \#13883 from
the Natural Science and Engineering Research Council (NSERC) of Canada and in part by MITACS,
both awarded to Bhattacharya.


\end{document}